\renewcommand{\paragraph}[1]{\medskip \noindent {\bf #1.}}
\newcommand*\circled[1]{\tikz[baseline=(char.base)]{
            \node[shape=circle,draw,inner sep=2pt] (char) {#1};}}
\newcommand\numberthis{\addtocounter{equation}{1}\tag{\theequation}}
\newtheorem{definition}{Definition}
\newtheorem{theorem}{Theorem}
\title{The Fairness of Leximin in Allocation of Indivisible Chores}
\author{
Xingyu Chen$^1$\footnote{Now at Facebook}\and
Zijie Liu$^1$\footnote{Contact Author, now at Google}\and
\affiliations
$^1$Duke University\\
\emails
xingyuchen@fb.com, zijieliu@google.com
}
\begin{document}

\maketitle

	\begin{abstract}
		The leximin solution --- which selects an allocation that maximizes the minimum utility, then the second minimum utility, and so forth --- is known to provide EFX (envy-free up to any good) fairness guarantee in some contexts when allocating indivisible goods. However, it remains unknown how fair the leximin solution is when used to allocate indivisible chores. In this paper, we demonstrate that the leximin solution can be modified to also provide compelling fairness guarantees for the allocation of indivisible chores. First, we generalize the definition of the leximin solution. Then, we show that the leximin solution finds a PROP1 (proportional up to one good) and PO (Pareto-optimal) allocation for 3 or 4 agents in the context of chores allocation with additive distinct valuations. Additionally, we prove that the leximin solution is EFX for combinations of goods and chores for agents with general but identical valuations.  		
	\end{abstract}

\section{Introduction}
\noindent Fair allocation of indivisible resources plays an important role in the society. In a variety of scenarios, such as task distribution at work, estate partitioning in life, and asset restructuring in business, people seek an allocation that satisfies each party's interest in a fair and efficient manner. To reach this ultimate goal, researchers have come up with various fairness notions \citep{foley1967resource,Lipton2004OnAF,Caragiannis2016TheUF,Conitzer:2017:FPD:3033274.3085125}. However, as \citet{Aziz2016ComputationalSC} noted, most of these fairness notions focus solely on allocation of goods (items with positive utilities), while items that need to be allocated can also be chores (i.e. they have negative utilities). Furthermore, the results in goods allocation may not necessarily carry over to chores. As \citet{aziz2018fair} and \citet{aleks2018envy} demonstrated in their work, several natural extension of the Maximum Nash Welfare solution, which is both fair and efficient when used to allocate indivisible goods \citep{Caragiannis2016TheUF}, fails to provide any reasonable fairness guarantee when chores are involved. The previous observations lead to this key question: does a fair allocation always exist when chores are involved?

In this paper, we partially answer this question by showing that the leximin solution --- which selects an allocation that maximizes the minimum utility, then the second minimum utility, and so forth \citep{sen1976welfare,sen1977social,rawls2009theory} --- provides compelling fairness guarantee in several settings that involve chores.
\subsection{Related work}
\noindent Fair allocation problem has been studied extensively in several different fields and contexts \citep{Brams1996FairD,Aziz2016ComputationalSC}. Among these studies, various different fairness notions have been established  \citep{foley1967resource,Lipton2004OnAF,Budish2010TheCA,Caragiannis2016TheUF,Conitzer:2017:FPD:3033274.3085125}. One popular fairness notion is
\textit{envy-freeness} (EF) \citep{foley1967resource}, which states that no agent should prefer another agent's set of items to her own. Another appealing fairness notion is \textit{proportionality} \citep{steinhaus1948}, which states that each agent should obtain at least $1/n$ of her total value for all the items. However, envy-freeness and proportionality cannot always be guaranteed, especially when we allocate indivisible items. Therefore, a series of relaxations to these notions have been proposed, including \textit{envy-freeness up to one good} (EF1), \textit{envy-freeness up to any good} (EFX) and \textit{proportionality up to one good} (PROP1) \citep{Lipton2004OnAF,Caragiannis2016TheUF,Conitzer:2017:FPD:3033274.3085125}.

Another important concern in allocation problems is efficiency. In this regard, an allocation is said to be \textit{Pareto-optimal} (PO) if there does not exist another allocation that increases the value received by at least one agent without compromising anyone else.

Researchers have also proposed various algorithms which prove the existence of or even efficiently find an allocation that achieves these fairness notions in different contexts.
For fair allocation of indivisible goods, \citet{Lipton2004OnAF} provided an algorithm that finds an EF1 allocation for general valuations but does not guarantee PO. \citet{Caragiannis2016TheUF} showed that the maximum Nash welfare (MNW) solution guarantees EF1 and PO for additive valuations. \citet{Roughgarden2017AlmostEF} showed that the leximin solution guarantees EFX and PO for general but identical valuations with nonzero marginal utility. Meanwhile, the leximin++ solution, a modification of the leximin solution they proposed, guarantees EFX but not PO for all general but identical valuations.

However, fair allocation of indivisible chores only started to receive attention after \citet{Aziz2016ComputationalSC} noted that the results in goods allocation may not necessarily carry over to chores. For combinations of goods and chores, \citet{aziz2018fair} showed that the double round robin algorithm can always generate an EF1 allocation when valuations are additive. For the two agent case, they presented a polynomial-time algorithm that finds an EF1 and PO allocation. \citet{aleks2018envy} also showed that a modified version of the leximin++ solution, which he calls leximax-, finds EFX allocations for indivisible chores with anti-monotone identical valuations. 

For fair allocation of divisible goods and chores, \citet{bogomolnaia2017competitive,bogomolnaia2019dividing} showed that competitive solutions always exist and guarantee EF and PO. When all items are chores, they showed that the efficient allocation with the largest product of disutility is one of possibly many competitive solutions.

\subsection{Our contribution}
\noindent In this paper, we consider the problem of fair and efficient allocation of indivisible chores as well as combinations of indivisible goods and chores for more than two agents. Specifically, we focus on the fairness guarantee provided by the generalization of the leximin solution. 
\begin{itemize}
    \item In Section \ref{sec:modelsandnotions}, we present our model for allocation of indivisible items and extend approximate fairness definitions for goods to include chores. And most importantly, we define our generalization of the leximin solution.
    \item In Section \ref{sec:prop1}, through the leximin solution, we show that a PROP1 and PO allocation of chores always exists for 3 or 4 agents with additive valuations.
    \item In Section \ref{sec:efx}, through the leximin solution, we show that an EFX allocation for combinations of goods and chores always exists for agents with general but identical valuations. With nonzero marginal utility constraint, PO can also be achieved in this context.
\end{itemize}

\section{Model}
\label{sec:modelsandnotions}
Let $\mathcal{N}$ denote the set of agents and $\mathcal{M}$ denote the set of items, where $|\mathcal{N}|=n$ and $|\mathcal{M}|=m$. Throughout this paper, we assume that items are indivisible, so that each item can only be allocated to one agent and in full. Each agent $i$ has a valuation function $v_i: 2^{\mathcal{M}} \rightarrow \mathbb{R}$ that indicates her utility for each subset of $\mathcal{M}$. We slightly abuse notation here and define $v_i(o) = v_i(\{o\})$ for $o\in\mathcal{M}$.

For general valuation function $v$, we assume that $v_i(\emptyset)=0$ and that $v_i$'s are item-wise monotone. If valuation function $v$ is additive, we further assume $v(S) = \sum_{o\in S} v(o)$. By item-wise monotone, we mean that each agent $i$ can partition $\mathcal{M}$ into $(G_i, C_i)$ such that she considers all items in $G_i$ as goods and all items in $C_i$ as chores, i.e. $v_i(S) \leq v_i(S\cup\{g\})$ if $g \in G_i$ and $v_i(S) \geq v_i(S\cup\{c\})$ if $c\in C_i$. Throughout the paper, we slightly abuse the notation of $g$ and $c$. Whenever we use $g$, we implicitly mean that the item is a good; similarly, whenever we use $c$, we implicitly mean that the item is a chore.

Finally, an \textit{allocation} $A$ is a partition of $\mathcal{M}$ into $n$ disjoint subsets, $(A_1, A_2, \ldots, A_n)$, where $A_i$ is the set of items, or \textit{bundle}, given to agent $i$. Throughout the paper, we are only interested in \textit{complete} allocations, where all items are allocated. 

\subsection{Fairness notions}
\label{sec:notion}
In this paper, we make use of two relaxations of envy-freeness and proportionality, namely envy-freeness up to any item (EFX) and proportionality up to one item (PROP1). We present their formal definitions here.
\begin{definition}[EFX]
An allocation $A$ is envy-free up to any item (EFX) if for all $i,j \in \mathcal{N}$ such that $i$ envies $j$, $v_i(A_i) \geq v_i(A_j\backslash\{g\})$ for all $g \in A_j \cap G_i$ and $v_i(A_i) \geq v_i(A_j\cup\{c\})$ for all $c \in A_i \cap C_i$.
\end{definition}
In other words, EFX demands that, if agent $i$ envies agent $j$, removing any item in agent $j$'s bundle that is considered as a good by agent $i$ guarantees that agent $i$ does not envy agent $j$. Similarly, giving agent $j$ a copy of any item in agent $i$'s bundle that agent $i$ considers as a chore guarantees that agent $i$ does not envy agent $j$.

In the same fashion, we give our definition of PROP1 for both goods and chores. 
\begin{definition}[PROP1]
An allocation $A$ is proportional up to one item (PROP1) if for all $i\in\mathcal{N}$,
\begin{itemize}
    \item $v_i(A_i) \geq v_i(\mathcal{M})/n$, or
    \item $\exists g \in \mathcal{M}\backslash A_i$ s.t. $v_i(A_i\cup\{g\}) \geq v_i(\mathcal{M})/n$, or
    \item $\exists c \in A_i$ s.t. $v_i(A_i\backslash\{c\}) \geq v_i(\mathcal{M})/n$.
\end{itemize}
\end{definition}
\hfill \newline
In addition to fairness, people are also concerned with efficiency in allocation problems. A classical efficiency notion is Pareto-optimality.
\begin{definition}
Given an allocation $A$, another allocation $B$ is a Pareto-improvement of $A$ if $v_i(B_i) \geq v_i(A_i)$ for all $i \in \mathcal{N}$ and $v_j(B_j) > v_j(A_j)$ for some $j \in \mathcal{N}$.
\end{definition}
\begin{definition}[PO]
An allocation $A$ is Pareto-optimal (PO) if there is no feasible allocation that is a Pareto-improvement of $A$.
\end{definition}

\subsection{The leximin solution}
\label{sec:aversion}
The traditional leximin solution selects the allocation that maximizes the minimum utility among all agents. If there are multiple allocations that achieve this minimum utility, it chooses among them the one that maximizes the second minimum utility, and so on. To generalize the leximin solution, for each agent $i$, we define an objective function $f_i: 2^{\mathcal{M}} \rightarrow \mathbb{R}^d$ that outputs a $d$-dimensional objective tuple, instead of a scalar utility score. The objective tuples are compared lexicographically. In other words, we start by comparing their first entries. If their first entries are the same, then we compare their second entries, and so on. The generalized leximin solution selects the allocation that maximizes the minimum objective tuple among all agents.

More formally, Algorithm~\ref{alg:generalized_leximin} defines a comparison operator $\prec$ that poses a total ordering over all allocations. Then the leximin solution selects the greatest allocation under this ordering.

\begin{algorithm}[h]
\caption{Leximin Comparison Operator (Returns true if $A \prec B$)}
\label{alg:generalized_leximin}
\begin{algorithmic}
	\STATE $X^A \leftarrow$ order agents by increasing objective tuple $f_i(A_i)$, then by some arbitrary but consistent tiebreak method for agents with the same objective tuple
	\STATE $X^B \leftarrow$ corresponding ordering of agents under B
	\FOR{$l \in \{1, \dots, n\}$}
	\STATE $i \leftarrow X_l^A$
	\STATE $j \leftarrow X_l^B$
	\IF{$f_i(A_i) \neq f_j(B_j)$}
	\RETURN $f_i(A_i) < f_j(B_j)$
	\ENDIF	
	\ENDFOR
	\RETURN \text{false}
\end{algorithmic}
\end{algorithm}
Note that the traditional leximin solution is a special case of this more general leximin solution where $f_i(S) = v_i(S)$. The leximin++ solution defined in \citep{Roughgarden2017AlmostEF}, which maximizes the minimum utility of any agent and then the number of goods owned by the agent with the minimum utility, is another special case of this generalized leximin solution where $f_i(S) = (v_i(S), |v_i(S)|)$.





\section{PROP1 and PO allocation of chores for additive valuations}
\label{sec:prop1}
In this section, we show that the leximin solution provides promising fairness and efficiency guarantees in allocation of chores for agents with additive utilities. More formally, suppose that $v_i$'s are additive and re-scale $\pmb{v}$ so that $v_i(\mathcal{M}) = U$ for all $i$ (note that $U$ is negative). Then we have the following result.
\begin{theorem}
    \label{thm:prop1}
	Let $f_i(S) = v_i(S)$. The leximin allocation $A$ with respect to $\pmb{f}$ is PROP1 and PO when there are 3 or 4 agents in the context of chores allocation with additive utilities.
\end{theorem}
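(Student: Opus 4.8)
\noindent The plan is to treat Pareto-optimality and PROP1 separately, with PO being essentially free and PROP1 carrying the real work.

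For PO I would use that \emph{every} leximin-optimal allocation is PO, for any $n$ and any choice of objective functions: if $B$ Pareto-improved $A$, then pairing each agent with herself gives $v_i(B_i)\ge v_i(A_i)$ for all $i$ with one strict inequality, so when both utility profiles are sorted increasingly the profile of $B$ dominates that of $A$ coordinatewise and exceeds it in some coordinate; hence Algorithm~\ref{alg:generalized_leximin} returns $A\prec B$, contradicting leximin-optimality. This part needs no restriction on $n$.

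For PROP1 I would first normalize so that $v_i(\mathcal M)=U<0$ for all $i$, making the proportional share $U/n$. Since every item is a chore for every agent, the second clause of PROP1 (adding an unallocated good) is never useful, so it suffices to show: for each agent $i$, either $v_i(A_i)\ge U/n$ or $v_i(A_i\setminus\{c_i^\star\})\ge U/n$, where $c_i^\star\in\argmin_{c\in A_i}v_i(c)$ is $i$'s heaviest chore in her own bundle. Suppose for contradiction agent~$1$ violates this; then $A_1\ne\emptyset$, $v_1(c_1^\star)<0$, $v_1(A_1\setminus\{c_1^\star\})<U/n$, and therefore $v_1(A_1)<U/n+v_1(c_1^\star)<U/n$. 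Using additivity and $\sum_j v_1(A_j)=U$, I would average over the $n-1$ agents $j\ne1$ to obtain some such $j$ with $v_1(A_j)>\tfrac{U}{n}+\tfrac{|v_1(c_1^\star)|}{n-1}$; informally, agent~$1$'s bundle is so heavy that from her viewpoint the remaining chores are both light and unevenly spread, which is the slack we exploit.

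The remaining task --- and the crux --- is to convert this slack into an allocation $A'$ with $A\prec A'$, contradicting leximin-optimality. The natural move is to transfer $c_1^\star$ (and, if needed, one further chore along a short chain) from agent~$1$ to a recipient chosen so that afterwards agent~$1$'s utility $v_1(A_1\setminus\{c_1^\star\})$ is no longer the bottleneck of the sorted profile while no recipient drops below the previous bottleneck. I would split on which agent realizes the minimum (then second-minimum, \dots) entry of the sorted profile of $A$ and on how agent~$1$'s bundle value compares with the others'. I expect the hard part to be the ``stuck'' configurations in which every candidate recipient disvalues $c_1^\star$ so strongly that a single transfer merely re-creates an equally low bottleneck: these should be handled by a two-step exchange (pass $c_1^\star$ to some $j$, then relieve $j$ by passing one of $j$'s own chores to a third agent), after which one must check, coordinate by coordinate of the sorted profile, that the net effect is a strict leximin improvement. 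The restriction to $n\in\{3,4\}$ is precisely what keeps this case analysis finite --- with at most four agents there are boundedly many patterns for the order statistics and for where the relevant chores sit --- which is why the statement is confined to $3$ or $4$ agents.
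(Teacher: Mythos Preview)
Your PO argument is fine and matches the paper. The PROP1 plan, however, has a real gap.

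The averaging step locates an agent $j$ whose bundle \emph{agent~$1$} finds light: it bounds $v_1(A_j)$. But when you hand a chore to some recipient $j$, the coordinate that moves in the leximin comparison is $v_j(A_j\cup\{c\})$, measured in \emph{$j$'s} valuation; with heterogeneous additive utilities nothing ties $v_1(A_j)$ to $v_j(A_j)$ or to $v_j(c)$, so the slack you computed is in the wrong currency and cannot be cashed in for a leximin improvement. The same obstacle recurs in your two-step chain: each hop's damage is governed by the recipient's own valuation of the passed chore, over which you have no control, so there is no reason such a chain ever yields $A\prec A'$. Your account of why $n\le4$ matters is also off: the case analysis is finite for every fixed $n$, and the paper in fact exhibits a $5$-agent instance where the leximin allocation violates PROP1, so no amount of case-checking can rescue $n=5$.

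The missing ingredient is an \emph{envy-free} agent. Since $A$ is PO, the envy graph is acyclic, hence some agent $j$ envies nobody and in particular $v_j(A_j)\ge U/n$. Now try the single transfer to $j$ of the chore $c_1\in A_i$ that $j$ dislikes least. Either $v_j(A_j)+v_j(c_1)>v_i(A_i)$, in which case both $i$ and $j$ end strictly above $v_i(A_i)$ while everyone else is unchanged, giving $A\prec A'$ --- contradiction; or else $v_j(c_1)\le v_i(A_i)-v_j(A_j)$, which upper-bounds $v_j(A_i)\le a\cdot v_j(c_1)$ in $j$'s own valuation (here $a\ge2$ is the number of strictly negative chores in $A_i$, forced by the PROP1 violation). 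Combining this with the quantitative bound on $v_i(A_i)$ that the PROP1 violation yields and with $v_j(A_j)\ge U/n$, a short computation entirely in $j$'s valuation shows that $U-v_j(A_i)-v_j(A_j)$ is too small in absolute value, so some third agent $k$ satisfies $v_j(A_k)>v_j(A_j)$ --- contradicting that $j$ envies no one. The restriction $n\le4$ enters precisely as the numerical condition $\tfrac{2}{n-2}\ge1$ needed in this last inequality, not as a bound on the number of cases.
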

\begin{proof}
    PO comes directly from the property of the leximin solution. If we can find an allocation $B$ which is a Pareto-improvement of $A$, then $v_i(B_i) \geq v_i(A_i)$ for all $i\in\mathcal{N}$ and $v_j(B_j) > v_j(A_j)$ for some $j \in \mathcal{N}$ and so $A \prec B$. Thus $A$ cannot be the leximin allocation. 
    
    Suppose \(A\) is not PROP1. According to our definition of PROP1, there must exist an agent $i$ such that 
    \begin{equation}
    	\label{eqn:iprop}
	    v_i(A_i) - v_i(c) < \frac{U}{n}\quad \forall c \in A_i.
    \end{equation}
    Since the leximin allocation $A$ is PO, there must exist an agent \(j\) who does not envy anyone. To see this, we can look at the envy graph generated by allocation $A$. The envy graph contains a directed edge $e = (u, v)$ if agent $u$ envies agent $v$. Suppose for contradiction that every agent \(j\) envies some other agent. Then starting at an arbitrary agent $s$, since every agent envies someone else, we can repeatedly follow an arbitrary outgoing edge from the current agent until we find an envy cycle. For agents in the envy cycle, we can reassign to each agent the bundle she envies. This reassignment will increase the utilities of all agents in the cycle without affecting other agents, which contradicts with PO.
    
    Now consider this agent $j$ who doesn't envy any other agent. Since agent $j$ doesn't envy any other agent, we must have,
    \begin{equation}
    	\label{eqn:ubj}
	    v_j(A_j) \geq \frac{U}{n}.
    \end{equation}
    Let $A_i^-=\{c|c\in A_i, v_i(c) < 0\}$ and $|A_i^-|=a$. Clearly by Equation \ref{eqn:iprop}, $a\geq 2$. Note also $v_i(A_i^-) = v_i(A_i)$.
    
    Consider the chore $c_1$ in $A_i^-$ that agent $j$ dislikes the least, i.e. $c_1={\arg\max}_{c\in A_i^-} v_j(c)$. Suppose for contradiction that \(v_j(A_j) + v_j(c_1) > v_i(A_i)\), we can construct \(A'\) by giving \((A_i\setminus\{c_1\})\) to agent \(i\) and \((A_j\cup\{c_1\})\) to agent \(j\). Since
    $$v_i(A'_i) = v_i(A_i \backslash \{ c_1 \}) = v_i(A_i) - v_i(c_1) > v_i(A_i)$$
    $$v_j(A'_j) = v_j(A_j \cup \{ c_1\}) = v_j(A_j) + v_j(c_1) > v_i(A_i)$$
    $$v_k(A'_k) = v_k(A_k) \quad \forall k \neq \{ i, j \}$$
    we have \(A \prec A'\). This contradicts with the fact that $A$ is the leximin allocation with respect to $\pmb{f}$. So it must be the case that 
    \begin{equation}
    	\label{eqn:ijenvy}
    	v_j(A_j) + v_j(c_1) \leq v_i(A_i).
    \end{equation}
    Consider the chore $c_2$ in agent $i$'s bundle that agent $i$ dislikes the most, i.e. \(c_2=\arg\min_{c\in A_i} v_i(c)\). According to the definition of PROP1, for all $c$ in agent $i$'s bundle, we know \(v_i(A_i) - v_i(c) < \frac{U}{n}\). In particular, we have $v_i(A_i) - v_i(c_2) < \frac{U}{n}$. Suppose \(v_i(A_i)=(1+\epsilon_1)\frac{U}{n}+v_i(c_2)\) where \(\epsilon_1>0\). Since $c_2$ is the most disliked item in agent $i$'s bundle, we have,
    \begin{align*}
        a \cdot v_i(c_2)\leq v_i(A_i)=(1+\epsilon_1)\frac{U}{n}+v_i(c_2).
    \end{align*}
    After some simplification, we have,
    \begin{align*}
        v_i(c_2)\leq\frac{1+\epsilon_1}{a-1}\frac{U}{n}.
    \end{align*}
    Therefore,
    \begin{align*}
        v_i(A_i) &= (1+\epsilon_1)\frac{U}{n}+v_i(c_2)\\
        		 &\leq (1+\epsilon_1)\frac{U}{n} + \frac{1+\epsilon_1}{a-1}\frac{U}{n}\\
        		 &= \frac{a(1+\epsilon_1)}{a-1}\frac{U}{n}\\
        		 &= \left(1+\frac{1}{a-1}+\frac{a\epsilon_1}{a-1}\right)\frac{U}{n}. \numberthis \label{eqn:lbi}
    \end{align*}
    Suppose \(v_i(A_i) = \left(1+\frac{1}{a-1}+\frac{a\epsilon_2}{a-1}\right)\frac{U}{n}\) where \(\epsilon_2 \geq \epsilon_1 > 0\). According to Equation~\ref{eqn:ubj}, we can suppose \(v_j(A_j)=(1-\epsilon_3)\frac{U}{n}\) where \(0 \leq \epsilon_3 \leq 1\). Then from Equation~\ref{eqn:ijenvy} and Equation~\ref{eqn:lbi}, we have,
    \begin{align*}
        v_j(c_1) &\leq v_i(A_i) - v_j(A_j) \\
        		 &= \left(1+\frac{1}{a-1}+\frac{a\epsilon_2}{a-1}\right)\frac{U}{n} - (1-\epsilon_3)\frac{U}{n}\\
        		 &= \left(\frac{1}{a-1}+\frac{a\epsilon_2}{a-1}+\epsilon_3\right)\frac{U}{n}.
    \end{align*}
	Summing agent $j$'s valuation of agent $i$'s bundle and her own bundle, we have,
    \begin{align*}
        v_j(A_i) + v_j(A_j) &\leq av_j(c_1) + v_j(A_j) \\
        &\leq \left[2+\frac{1}{a-1}+\frac{a^2 \epsilon_2}{a-1}+(a-1)\epsilon_3\right]\frac{U}{n}
    \end{align*}
    Therefore, there must exists an agent \( k \neq i, j\) such that,
    \begin{align*}
        v_j(A_k) &\geq \frac{U-v_j(A_i)-v_j(A_j)}{n-2}\\ 
        &\geq \frac{U-\left[2+\frac{1}{a-1}+\frac{a^2 \epsilon_2}{a-1}+(a-1)\epsilon_3\right]\frac{U}{n}}{n-2}\\
        &= \frac{U-2\frac{U}{n}}{n-2} - \left(\frac{\frac{1}{a-1}+\frac{a^2 \epsilon_2}{a-1}+(a-1)\epsilon_3}{n-2}\right)\frac{U}{n}\\
        &= \left(1-\frac{\frac{1}{a-1}+\frac{a^2 \epsilon_2}{a-1}+(a-1)\epsilon_3}{n-2}\right)\frac{U}{n}.
    \end{align*}
    Since agent $j$ doesn't envy any other agent, her valuation of any other agent's bundle cannot be larger than her valuation of her own bundle. However,
    \begin{gather*}
    \frac{1}{a-1}+\frac{a^2 \epsilon_2}{a-1}+(a-1)\epsilon_3 > \frac{1}{a-1}+(a-1)\epsilon_3\\
    \geq \epsilon_3 \cdot \left( (a-1) + \frac{1}{a-1} \right)
    \geq 2\epsilon_3
    \end{gather*}
    Therefore,
    \begin{align*}
    	v_j(A_k) &\geq \left(1-\frac{\frac{1}{a-1}+\frac{a^2 \epsilon_2}{a-1}+(a-1)\epsilon_3}{n-2}\right)\frac{U}{n}\\
    	&> \left(1-\frac{2 \epsilon_3}{n-2}\right)\frac{U}{n}.
    \end{align*}
So \(v_j(A_k) > v_j(A_j)\) when \(n=3\text{ or }4\), which contradicts the assumption that agent \(j\) does not envy anyone. So when \(n=3\text{ or }4\), there can't exist an agent $i$ such that 
\begin{equation*}
v_i(A_i) - v_i(c) < \frac{U}{n}\quad \forall c \in A_i.
\end{equation*}
In conclusion, when there are 3 or 4 agents, the allocation found by the leximin solution with respect to $\pmb{f}$ is PROP1 and PO.
\end{proof}

\subsection{Limitations of the leximin solution}
Since we showed that the leximin solution with $f_i(S) = v_i(S)$ guarantees PROP1 and PO for 3 or 4 agents with additive valuations, a natural question to ask is whether it works for 5 or more agents. Unfortunately, the answer is no. We show this by providing a counterexample with 5 agents, where the leximin allocation is not PROP1.
\begin{theorem}
		Let $f_i(S) = v_i(S)$. The leximin allocation \(A\) with respect to \(\pmb{f}\) cannot guarantee PROP1 when there are 5 or more agents.
\end{theorem}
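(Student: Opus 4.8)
The plan is to exhibit, for every $n\ge 5$, an explicit $n$-agent instance on which the leximin allocation fails PROP1. I would build it from $n+1$ chores: two ``heavy'' chores $h_1,h_2$ and $n-1$ ``mild'' chores $g_1,\dots,g_{n-1}$. Fix a parameter $h$ with $\frac{1}{n}<h<\frac{1}{4}$ --- this interval is nonempty exactly when $n\ge 5$ --- and a parameter $H$ with $\max\{2h,\ \frac{1}{2}-h\}<H<\frac{1}{2}$. Let agent $1$ have $v_1(h_1)=v_1(h_2)=-h$ and $v_1(g_\ell)=-\frac{1-2h}{n-1}$ for each $\ell$, and let every agent $j\neq 1$ have $v_j(h_1)=v_j(h_2)=-H$ and $v_j(g_\ell)=-\frac{1-2H}{n-1}$ for each $\ell$. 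Then $v_i(\mathcal{M})=-1$ for all $i$, so the proportional share is $-\frac{1}{n}$, and every item is a chore for everyone. (An arbitrarily small generic perturbation makes all valuations pairwise distinct without disturbing any strict inequality below.) The bounds guarantee that $h_1,h_2$ are the items agent $1$ dislikes least, yet $v_j(h_1)=-H<-2h=v_1(\{h_1,h_2\})$ for every $j\neq 1$, and that the mild chores are so light that even all $n-1$ of them together leave any agent strictly above $-2h$ (this is where $H>\frac{1}{2}-h$ is used).

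I would then prove that the leximin allocation hands $\{h_1,h_2\}$ to agent $1$. In any allocation the chore $h_1$ is held by someone: if by an agent other than $1$, that agent's utility is at most $-H<-2h$; if by agent $1$, then $h_2$ is placed too, putting some agent at utility at most $\max\{v_1(\{h_1,h_2\}),-H\}=-2h$. Hence no allocation has minimum utility above $-2h$, and an allocation reaches $-2h$ only if $h_1,h_2\in A_1$ and in fact $A_1=\{h_1,h_2\}$ (any further chore would push agent $1$ below $-2h$), with $g_1,\dots,g_{n-1}$ split among agents $2,\dots,n$, each of whom then lies strictly above $-2h$ by the choice of $H$. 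Among these allocations the leximin rule next maximizes the remaining utilities lexicographically, which happens exactly for the balanced split with one mild chore per agent. So the leximin allocation gives agent $1$ the bundle $\{h_1,h_2\}$; since $v_1(A_1)=-2h<-\frac{1}{n}$, since $v_1(A_1\setminus\{c\})=-h<-\frac{1}{n}$ for either $c\in A_1$, and since adjoining any further (negatively valued) item only decreases agent $1$'s value, all three clauses of the PROP1 definition fail for agent $1$. For general $n$ no padding argument is needed, since the family above is defined uniformly in $n$.

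The step I expect to be the main obstacle is confirming that the case analysis pinning down the leximin allocation is exhaustive and that the numerical bounds on $h$ and $H$ are jointly achievable. The analysis is elementary, but it rests on simultaneously keeping every $v_i(\mathcal{M})$ equal and maintaining two strict separations: $-H<-2h$, so that each heavy chore is worse for the other agents than agent $1$'s whole pair, and $-(1-2H)>-2h$, so that no bundle of mild chores is as bad as that pair. The first separation requires $2h<\frac{1}{2}$ alongside $h>\frac{1}{n}$, i.e.\ $\frac{1}{n}<\frac{1}{4}$, i.e.\ exactly $n\ge 5$ --- which is precisely why the bound of Theorem~\ref{thm:prop1} cannot be pushed further.
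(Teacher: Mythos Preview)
Your proposal is correct. Both you and the paper argue by exhibiting an explicit chores instance on which the leximin allocation violates PROP1, so the overall strategy is the same, but your realization is somewhat different and in several respects cleaner. The paper writes down a single numerical table for $n=5$ with seven chores, three of which are ``heavy'' and must all be absorbed by agent~$1$, and then simply asserts that ``similar examples can also be constructed for more than 5 agents.'' You instead give a uniform parametric family for every $n\ge 5$ using only $n+1$ chores and just two heavy ones; your case analysis showing that any allocation with maximal minimum utility must set $A_1=\{h_1,h_2\}$ is exhaustive and already forces the PROP1 violation without needing to identify the remainder of the leximin allocation. A further bonus is that your feasibility window $\frac{1}{n}<h<\frac{1}{4}$ makes the threshold $n\ge 5$ emerge directly from the construction, tying the counterexample neatly to the positive result of Theorem~\ref{thm:prop1} for $n\le 4$.
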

\begin{proof}
Consider the example shown in Table \ref{tab:counterleximax}. We claim that the leximin allocation with respect to $\pmb{f}$ is marked by the circles.
	\begin{table}
        \addtolength{\tabcolsep}{-5pt}
		\centering
		\begin{tabular}{|c||c|c|c|c|c|c|c|}
			\hline
			& a & b & c & d & e & f & g \\ \hline \hline
			A1 & \circled{-6} & \circled{-6} & \circled{-6} & -9 & -9 & -9 & -10 \\ \hline
			A2 & -18.1 & -18.1 & -18.1 & \circled{-0.1} & -0.2 & -0.2 & -0.2 \\ \hline
			A3 & -18.1 & -18.1 & -18.1 & -0.2 & \circled{-0.1} & -0.2 & -0.2 \\ \hline
			A4 & -18.1 & -18.1 & -18.1 & -0.2 & -0.2 & \circled{-0.1} & -0.2 \\ \hline
			A5 & -18.1 & -18.1 & -18.1 & -0.2 & -0.2 & -0.2 & \circled{-0.1} \\ \hline
		\end{tabular}
		\caption{An example where the leximin solution with respect to $\pmb{f}$ fails to achieve PROP1 with 5 agents}
		\label{tab:counterleximax}
	\end{table}
	\noindent To see this, clearly, we can't allocate any item $a$, $b$ or $c$ to any agent other than Agent 1 since
	\begin{gather*}
	    v_1(\{a, b, c\}) = -18 > v_i(\{o\}) = -18.1\\
	    \forall i\in\{2,3,4,5\},o\in\{a,b,c\}.
	\end{gather*}
	So we must give items $a$, $b$ and $c$ to Agent 1. For the rest of items, it's clear that each agent should get the item they dislike the least. If this is not the case, one agent will have a utility that is no higher than $-0.2$, which is worse than $A$.\\~\\
	However, for Agent 1, 
	\begin{gather*}
	    v_1(A_1 \backslash \{o\}) = -12 < \frac{U}{n} = \frac{-55}{5} = -11 \\
	    \forall o \in A_1 = \{a, b, c\}.
	\end{gather*}
	So the leximin allocation with respect to $\pmb{f}$ is not PROP1 in this example.
\end{proof}

Similar examples can also be constructed for more than 5 agents. This shows that our analysis is in fact tight. The leximin solution works for 3 or 4 agents in this setting but not more than that. \\


\section{EFX allocation of goods and chores for general but identical valuations}
\label{sec:efx}

\citet{Roughgarden2017AlmostEF} showed that in the context of goods allocation with general but identical valuations, the leximin++ solution guarantees EFX. The leximin++ solution is a special case of the generalized leximin solution where $f(S) = (v(S), |S \cap G|)$ (we omit the $i$ subscripts here since the valuations are identical). 

In this section, we show that by setting $f(S) = (v(S), |S \cap G|, -|S \cap C|)$, the leximin solution with respect to $\pmb{f}$ provides the same fairness guarantee when used to allocate combinations of goods and chores for agents that have generalized but identical valuations. Essentially, the leximin solution with respect to $\pmb{f}$ selects the allocation that maximizes the minimum utility of any agent. After maximizing the minimum utility, it maximizes the number of goods owned by the agent with the minimum utility. Furthermore, it minimizes the number of chores owned by the agent with the minimum utility. If there are multiple allocations that achieve this, this leximin solution repeats this process on the second minimum utility, and so on. Our proof follows the same paradigm as \citep{Roughgarden2017AlmostEF}, but is applied to a more complicated setting.

\begin{theorem}
    \label{thm:efx}
    Let $f(S) = (v(S), |S \cap G|, -|S \cap C|)$. The leximin allocation with respect to $\pmb{f}$ is EFX in the context of allocation of combinations of goods and chores for agents that have general but identical valuations.
\end{theorem}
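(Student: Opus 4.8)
The plan is to follow the template of \citet{Roughgarden2017AlmostEF} for leximin++ on goods, adapting it to chores via the extra coordinate $-|S\cap C|$ of $\pmb{f}$. Since valuations are identical, write $v$ for the common valuation and $(G,C)$ for the common partition of $\mathcal{M}$ into goods and chores. Suppose for contradiction that the leximin allocation $A$ with respect to $\pmb{f}$ is not EFX. Then there are agents $i,j$ with $v(A_j)>v(A_i)$ and either (a) some good $g\in A_j\cap G$ with $v(A_j\setminus\{g\})>v(A_i)$, or (b) some chore $c\in A_i\cap C$ with $v(A_j\cup\{c\})>v(A_i)$. In each case I will construct an allocation $A'$ with $A\prec A'$, contradicting the leximin-optimality of $A$.

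The one structural fact used in both cases is the following. Suppose $A'$ agrees with $A$ on every agent outside $\{i,j\}$, and suppose both $f(A'_i)$ and $f(A'_j)$ are lexicographically strictly greater than $f(A_i)$, while $f(A_i)<f(A_j)$ (which holds here since $v(A_j)>v(A_i)$). Then $A\prec A'$. Indeed $\min\{f(A'_i),f(A'_j)\}>f(A_i)=\min\{f(A_i),f(A_j)\}$, so the sorted-increasing $2$-vector of $\{f(A'_i),f(A'_j)\}$ is lexicographically larger than that of $\{f(A_i),f(A_j)\}$ already at the first entry; merging in the common multiset of objective tuples of the remaining $n-2$ agents on both sides preserves this lexicographic comparison of the sorted $n$-vectors, by the usual separability property of the leximin order. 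So it suffices, in each case, to produce such an $A'$.

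For case (a), move $g$ from $j$ to $i$: set $A'_i=A_i\cup\{g\}$, $A'_j=A_j\setminus\{g\}$, and $A'_k=A_k$ otherwise. Since $v$ is item-wise monotone, $v(A'_i)\ge v(A_i)$, and if this is an equality then the good-count coordinate strictly increases ($|A'_i\cap G|=|A_i\cap G|+1$), so in all cases $f(A'_i)>f(A_i)$. Also $v(A'_j)=v(A_j\setminus\{g\})>v(A_i)$ by the case hypothesis, hence $f(A'_j)>f(A_i)$, and the structural fact gives $A\prec A'$. For case (b), move $c$ from $i$ to $j$: set $A'_i=A_i\setminus\{c\}$, $A'_j=A_j\cup\{c\}$, others unchanged. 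Removing a chore gives $v(A'_i)\ge v(A_i)$, and if this is an equality then the good-count coordinate is unchanged while the chore-count coordinate strictly improves ($-|A'_i\cap C|=-|A_i\cap C|+1$), so again $f(A'_i)>f(A_i)$. The case hypothesis gives $v(A'_j)=v(A_j\cup\{c\})>v(A_i)$, so $f(A'_j)>f(A_i)$, and the structural fact again yields $A\prec A'$.

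Both cases contradict the optimality of $A$, so $A$ is EFX. I expect the main obstacle to be the careful verification of the structural fact — in particular the claim that merging in the same multiset of tuples on both sides preserves the sorted-lexicographic comparison — together with the bookkeeping that pins down why $\pmb{f}$ needs exactly these two auxiliary coordinates: the good-count coordinate is what makes agent $i$'s objective strictly increase in case (a) when $v$ does not move, and the negated chore-count coordinate plays the same role in case (b). Handling both EFX directions (removing a good from the envied bundle versus adding a chore to it) with these two coordinates is the only genuinely new ingredient beyond the goods-only argument of \citet{Roughgarden2017AlmostEF}.
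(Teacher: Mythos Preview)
Your proof is correct and follows the same core idea as the paper: in each of the two EFX-violation cases you construct exactly the same improving allocation $A'$ (move the offending good $g$ from $j$ to $i$, respectively the offending chore $c$ from $i$ to $j$), and you use the second and third coordinates of $f$ in precisely the way the paper does, to force $f(A'_i)>f(A_i)$ strictly even when $v$ does not move.

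The packaging differs. The paper argues directly on the orderings $X^A,X^B$: in Case~1 it first replaces the envying agent $i$ by a minimum-utility agent (which is legitimate because valuations are identical and the witnessing good lies in $A_j$), then tracks positions in $X^A$ and $X^B$ to locate the first index where the comparison resolves; in Case~2 this reduction is unavailable (the witnessing chore lives in $A_i$), so the paper introduces the auxiliary sets $S,T,U$ and does a somewhat heavier index-by-index analysis. Your structural fact --- replacing $\{f(A_i),f(A_j)\}$ by two tuples both strictly above $f(A_i)$ and merging in the unchanged multiset of the remaining agents strictly improves the leximin order --- handles both cases uniformly and removes the need for the minimum-utility WLOG and the $S,T,U$ bookkeeping. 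What the paper's more explicit argument buys is that it never appeals to a general ``separability of leximin under common merges'' lemma, so nothing is left to the reader; what your abstraction buys is a shorter, symmetric treatment of the two cases.
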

\begin{proof}
Let $A$ be an allocation that is not EFX. We will show that $A$ cannot be the leximin allocation with respect to $\pmb{f}$.

Since $A$ is not EFX, there exists a pair of agents $i,j$ such that either
$$\exists g \in A_j \cap G \text{ s.t. } v(A_i) < v(A_j \backslash \{ g \})$$
or
$$\exists c \in A_i \cap C \text{ s.t. } v(A_i) < v(A_j \cup \{ c \}).$$
\textbf{Case 1: There exist agents $i,j$ and a good $g \in A_j \cap G$ such that $v(A_i) < v(A_j \backslash \{ g \})$.}

Let $i$ be any agent with utility $\min_k v(A_k)$, the above condition must be true, so we assume without loss of generality that $i = \arg \min_k v(A_k)$. If there are multiple agents with minimum utility in $A$, let $i$ be the one considered last in the ordering $X^A$. Recall that according to Algorithm~\ref{alg:generalized_leximin}, $X^A$ is obtained by ordering agents by increasing objective tuple $f(A_i)$ and then by some arbitrary but consistent tiebreak method.

Define a new allocation $B$ where $B_i = A_i \cup \{g\}$, $B_j = A_j \backslash \{g\}$, and $B_k = A_k$ for all $k \notin \{ i, j\}$. Let $S$ be the set of agents appearing before $i$ in $X^A$. The only bundles that differ between allocations $A$ and $B$ are those of $i$ and $j$, so we have $A_k = B_k$ for all $k \in S$. Thus for all $k \in S$, $v(B_k) = v(A_k) = v(A_i)$. Since $v(B_j) = v(A_j \backslash \{g\}) > v(A_i)$, $j$ must occur after every agent in $S$ in $X^B$.

Since $g \in G$, $v(B_i) = v(A_i \cup \{g\}) \geq v(A_i)$. If $v(B_i) > v(A_i)$, $i$ must occur after every agent in $S$, since $v(B_i) > v(A_i) = v(B_k)$. If $v(B_i) = v(A_i)$, $i$ must still occur after every agent in $S$, since we sort by increasing number of goods if both agents have the same utility.

The utility of every agent besides $i$ and $j$ are the same in $A$ and $B$. We also proved that $i$ and $j$ will still be after $S$ in $X^B$. This means that the first $|S|$ agents in both $X^A$ and $X^B$ are the same, so the leximin comparison won't terminate before position $|S|+1$ in the orderings.

Let $T$ be the set of agents appearing after $i$ in $X^A$. Since $i$ is the last agent with minimum utility, we know
$$v(B_k) = v(A_k) > v(A_i) \quad \forall k \in T \backslash \{ j \}.$$
Now consider the $(|S|+1)$th agent. We know $X^A_{|S|+1} = i$. If $X^B_{|S|+1} = i$, we know $|A_i \cap G| < |B_i \cap G|$, so $A \prec B$. If $X^B_{|S|+1} = k$ for some $k \neq i$, we know $v(A_i) < v(B_k)$, so $A \prec B$ as well. \\~\\
\noindent\textbf{Case 2: There exist agents $i,j$ and a chore $c \in A_i \cap C$ such that $v(A_i) < v(A_j \cup \{ c \})$.}

Define a new allocation $B$ where $B_i = A_i \backslash \{ c \}$, $B_j = A_j \cup \{ c \}$, and $B_k = A_k$ for all $k \notin \{ i, j\}$. Let $S$ be the set of agents appearing before $i$ in $X^A$, $T$ be the set of agents with the same utility, number of goods and number of chores as $i$ in $A$ and $U = S\cup T \setminus \{i\}$. The only bundles that differ between allocations $A$ and $B$ are those of $i$ and $j$, so we have $A_k = B_k$ for all $k \in U$. Thus for all $k \in U$, $v(B_k) = v(A_k) \leq v(A_i)$. Since $v(B_j) = v(A_j \cup \{ c \}) > v(A_i)$, $j$ must occur after every agent in $U$ in $X^B$.

Since $c \in C$, $v(B_i) = v(A_i \backslash \{ c \}) \geq v(A_i)$. If $v(B_i) > v(A_i)$, $i$ must occur after every agent in $U$, since $v(B_i) > v(A_i) \geq v(B_k)$ for all $k \in U$. If $v(B_i) = v(A_i)$, $i$ must still occur after every agent in $U$, since we sort by decreasing number of chores if two bundles have the same utility and the same number of goods.

The utility of every agent besides $i$ and $j$ are the same in $A$ and $B$. We also proved that $i$ and $j$ will both be after $U$ in $X^B$. This means that the first $|U|$ agents in $X^A$ and $X^B$ have the same utility, number of goods and number of chores pairwise. The leximin comparison won't terminate before position $|U|+1$ in the orderings.

Now consider the $(|U|+1)$th agent. Suppose $X^A_{|U|+1} = l$. We know $l$ is the last agent with the same utility, number of goods and number of chores as $i$ in $X^A$. If $X^B_{|U|+1} = i$, we know $v(A_l) \leq v(B_i)$, $|A_l \cap G| = |B_i \cap G|$ and $|A_l \cap C| > |B_i \cap C|$, so $A \prec B$. If $X^B_{|U|+1} = k$ for some $k \neq i$, there are three possibilities since $k$ appears after $l$ in $X^A$, 
\begin{enumerate}
	\item $v(A_l) < v(B_k)$
	\item $v(A_l) = v(B_k)$ and $|A_l \cap G| < |B_k \cap G|$
	\item $v(A_i) = v(B_k)$ and $|A_l \cap G| = |B_k \cap G|$ and $|A_l \cap C| > |B_k \cap C|$.
\end{enumerate}
In all three possibilities, we have $A \prec B$.

Since $A \prec B$ in both cases, $A$ cannot be the leximin allocation with respect to \pmb{f}. Therefore, the leximin allocation with respect to \pmb{f} must be EFX. 
\end{proof}

\subsection{Pareto-optimality}
The leximin solution does not make any guarantee for efficiency in the previous setting. However, if we are willing to make a small assumption about the valuation function, the leximin solution can achieve EFX and PO at the same time. This assumption is on marginal utilities.

Similar to \citep{Roughgarden2017AlmostEF}, we say that a valuation $v$ has nonzero marginal utility if for every set of items $S \subset \mathcal{M}$, $v(S\cup\{g\}) > v(S)$ for all $g \in G \backslash S$ and $v(S\cup\{c\}) < v(S)$ for all $c \in C \backslash S$. If we assume all agents have nonzero marginal utilities, we have the following theorem.
\begin{theorem}
    \label{thm:efx-po}
    Let $f(S) = v(S)$. The leximin allocation with respect to $\pmb{f}$ is EFX and PO in the context of allocation of combinations of goods and chores for agents that have general but identical valuations with nonzero marginal utility.
\end{theorem}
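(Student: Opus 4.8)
The plan is to obtain PO exactly as in the proof of Theorem~\ref{thm:prop1}, and to obtain EFX by re-running the argument of Theorem~\ref{thm:efx} with the single-coordinate objective $f(S)=v(S)$, using the nonzero marginal utility hypothesis in place of the second and third coordinates of the objective tuple used there. For PO: if $B$ is a Pareto-improvement of $A$ then $v(B_k)\ge v(A_k)$ for every agent $k$ with strict inequality for some agent, which forces the sorted utility vector of $B$ to dominate that of $A$ coordinatewise and to differ from it, so $A\prec B$ and $A$ is not the leximin allocation.

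For EFX, suppose $A$ is not EFX. As in the proof of Theorem~\ref{thm:efx}, and using that valuations are identical, the failure splits into two cases. In Case 1 there are agents $p,j$ and a good $g\in A_j\cap G$ with $v(A_p)<v(A_j\setminus\{g\})$; since $v(A_i)\le v(A_p)$ for any minimum-utility agent $i$, the same $j$ and $g$ also witness a violation for such an $i$, so we may assume $i$ is a minimum-utility agent, chosen last in $X^A$ among them, and let $S$ be the set of agents before $i$ in $X^A$ (exactly the other minimum-utility agents). In Case 2 there is a chore $c\in A_i\cap C$ with $v(A_i)<v(A_j\cup\{c\})$ for some pair $(i,j)$, and we let $U$ be the set of agents other than $i$ whose utility is at most $v(A_i)$. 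In each case form the swap allocation $B$ — in Case 1, $B_i=A_i\cup\{g\}$ and $B_j=A_j\setminus\{g\}$; in Case 2, $B_i=A_i\setminus\{c\}$ and $B_j=A_j\cup\{c\}$ — leaving all other bundles unchanged, and show $A\prec B$.

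The key step, replacing the tiebreak coordinates of Theorem~\ref{thm:efx}, is that nonzero marginal utility makes the swap strictly improve agent $i$: $v(B_i)=v(A_i\cup\{g\})>v(A_i)$ in Case 1 (as $g\in G\setminus A_i$) and $v(B_i)=v(A_i\setminus\{c\})>v(A_i)$ in Case 2 (as $c\in C\cap A_i$); moreover $v(B_j)>v(A_i)$ follows directly from the EFX violation. In Case 1 this means both $i$ and $j$ leave the minimum-utility class, so the agents of $S$ still occupy positions $1,\dots,|S|$ in both $X^A$ and $X^B$, while at position $|S|+1$ the agent of $X^A$ is $i$, with utility $v(A_i)$, and the agent of $X^B$ has strictly larger utility, giving $A\prec B$. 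In Case 2 the first $|U|$ positions of $X^A$ and $X^B$ carry the same objective values, whereas at position $|U|+1$ the agent of $X^A$ has utility exactly $v(A_i)$ and the agent of $X^B$ — which cannot lie in $U$, hence is $i$, $j$, or an agent of utility exceeding $v(A_i)$ — has strictly larger utility, again giving $A\prec B$. In either case $A$ is not the leximin allocation, a contradiction, so the leximin allocation is EFX.

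The main obstacle is the bookkeeping showing that the sorted orderings $X^A$ and $X^B$ genuinely agree, position by position in objective value, on the claimed prefix, and in particular that $i$ and $j$ cannot re-enter that prefix after the swap. This turns out to be cleaner than in Theorem~\ref{thm:efx}: because nonzero marginal utility makes $v(B_i)>v(A_i)$ strict, all the \emph{tied-utility} subcases that the earlier proof had to settle with the extra coordinates simply do not arise. The one genuinely new point is that in Case 2 agent $j$'s utility drops under the swap, $v(B_j)=v(A_j\cup\{c\})<v(A_j)$; one must check $j$ does not land inside the prefix $U$, which holds because the violation already gives $v(B_j)=v(A_j\cup\{c\})>v(A_i)$, placing $j$ strictly above every agent of $U$.
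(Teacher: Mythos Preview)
Your proposal is correct and follows essentially the same approach as the paper's proof: the same swap allocations in the two cases, the same use of nonzero marginal utility to get the strict inequality $v(B_i)>v(A_i)$, and the same prefix/position-$|S|{+}1$ (resp.\ $|U|{+}1$) comparison of the sorted orderings. You are in fact slightly more careful than the paper in one place: you correctly restrict the ``take $i$ to be a minimum-utility agent'' reduction to Case~1 only, whereas the paper states it before both cases even though in Case~2 the chore is tied to $A_i$; your handling of Case~2 via $U=\{k\ne i: v(A_k)\le v(A_i)\}$ and your explicit check that $j\notin U$ match exactly what the paper's Case~2 argument actually uses.
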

\begin{proof}
Now let $A$ be an allocation that is not EFX. Then there exists a pair of agents $i,j$ such that either
$$\exists g \in A_j \cap G \text{ s.t. } v(A_i) < v(A_j \backslash \{ g \})$$
or
$$\exists c \in A_i \cap C \text{ s.t. } v(A_i) < v(A_j \cup \{ c \}).$$
Assume without loss of generality that $i = \arg \min_k v(A_k)$. If there are multiple agents with minimum utility in $A$, let $i$ be the one considered last in the ordering $X^A$.\\~\\
\textbf{Case 1: There exist agents $i,j$ and a good $g \in A_j \cap G$ such that $v(A_i) < v(A_j \backslash \{ g \})$.} \\~\\
Define a new allocation $B$ where $B_i = A_i \cup \{g\}$, $B_j = A_j \backslash \{g\}$, and $B_k = A_k$ for all $k \notin \{ i, j\}$. Let $S$ be the set of agents appearing before $i$ in $X^A$. Since we assume nonzero marginal utility and $g \in G$, $v(B_i) = v(A_i \cup \{g\}) > v(A_i)$. In addition, we know that $v(B_i) = v(A_j \backslash \{g\}) > v(A_i)$. So $i$ and $j$ must occur after every agent in $S$ in $X^B$. The first $|S|$ agents in $X^A$ and $X^B$ are the same, so the leximin comparison will not terminate before position $|S|+1$ in the orderings.\\~\\
Let $T$ be the set of agents appearing after $i$ in $X^A$. Since $i$ is the last agent with minimum utility, we know
$$v(B_k) = v(A_k) > v(A_i) \quad \forall k \in T \backslash \{ j \}.$$
Now consider the $(|S|+1)$th agent. We know $X^A_{|S|+1} = i$. If $X^B_{|S|+1} = i$, we know $v(A_i) < v(B_i)$, so $A \prec B$. If $X^B_{|S|+1} = k$ for some $k \neq i$, we know $v(A_i) < v(B_k)$, so $A \prec B$ as well. \\~\\
\textbf{Case 2: There exist agents $i,j$ and a chore $c \in A_i \cap C$ such that $v(A_i) < v(A_j \cup \{ c \})$.} \\~\\
Define a new allocation $B$ where $B_i = A_i \backslash \{ c \}$, $B_j = A_j \cup \{ c \}$, and $B_k = A_k$ for all $k \notin \{ i, j\}$. Let $S = \{k|v(A_k) \leq v(A_i), k \neq i\}$. For all $k \in S$, $v(B_k) = v(A_k) \leq v(A_i)$. Since we assume nonzero marginal utility and $c \in C$, $v(B_i) = v(A_i \backslash \{ c \}) > v(A_i)$. In addition, we know that $v(B_j) = v(A_j \cup \{ c \}) > v(A_i)$. So $i$ and $j$ must occur after every agent in $S$ in $X^B$. The first $|S|$ agents in $X^A$ and $X^B$ have the same utility pairwise, so the leximin comparison will not terminate before position $|S|+1$ in the orderings. \\~\\
Now consider the $(|S|+1)$th agent. Suppose $X^A_{|S|+1} = l$. We know $v(A_l) = v(A_i)$. If $X^B_{|S|+1} = i$, we know $v(A_l) < v(B_i)$, so $A \prec B$. If $X^B_{|S|+1} = k$ for some $k \neq i$ and $k \notin S$, we know $v(A_l) = v(A_i) < v(B_k)$, so $A \prec B$ as well. \\~\\
Since $A \prec B$ in both cases, $A$ cannot be the leximin solution. Therefore the leximin solution must be EFX.
\end{proof}


\section{Conclusion and future work}
In this paper we formally studied the fair allocation problem involving both indivisible goods and chores. Our work shows that the leximin solution provides compelling fairness guarantee even when indivisible chores are involved. Some interesting questions for future research are whether EFX/EF1 and PO allocation always exists for 3 or more agents and whether PROP1 and PO allocation always exists for 5 or more agents when the agents have distinct valuation functions.


\newpage
\bibliographystyle{named}
\bibliography{lit}

\end{document}